
\documentclass[lettersize, 10 pt, journal, twoside]{IEEEtran}  


\markboth{IEEE Robotics and Automation Letters. Preprint Version. Accepted November, 2023}
{Ho \MakeLowercase{\textit{et al.}}: Sampling-based Reactive Synthesis for Nondeterministic Hybrid Systems} 

\title{Sampling-based Reactive Synthesis for Nondeterministic Hybrid Systems}
\author{Qi Heng Ho, Zachary N. Sunberg, and Morteza Lahijanian%
\thanks{Manuscript received: September, 13, 2023; Accepted November, 16, 2023.}
\thanks{This paper was recommended for publication by Editor Aniket Bera upon evaluation of the Associate Editor and Reviewers' comments.
This work was supported by Strategic University Research Partnership (SURP) grants from the NASA Jet Propulsion Laboratory (JPL) (RSA 1688009 and 1704147).} 
\thanks{The authors are with the Department of Aerospace Engineering Sciences, University of Colorado Boulder, CO, USA
        {\tt\small \{\textit{firstname}.\textit{lastname}\}@colorado.edu}}
\thanks{Digital Object Identifier (DOI): 10.1109/LRA.2023.3340029}
}

\pdfminorversion=4

\let\proof\relax
\let\endproof\relax
\usepackage{amsthm}
\usepackage[utf8]{inputenc}
\usepackage{amsmath,amssymb}
\usepackage{booktabs}
\usepackage{graphicx}
\usepackage[font=small]{subcaption}
\usepackage[dvipsnames]{xcolor}
\usepackage{numprint}
\usepackage{comment}
\usepackage[linesnumbered,ruled,noend]{algorithm2e}
\usepackage{eso-pic}
\usepackage[american]{babel}
\captionsetup[figure]{font=small}
\captionsetup[table]{font=small}
\usepackage{svg}
\usepackage{xspace}

\usepackage{adjustbox}
\usepackage{tikz}
\usetikzlibrary{automata,positioning,shapes,arrows}

\usepackage[normalem]{ulem} 

\usepackage{tabularx,siunitx}
\usepackage{cite}
\usepackage{hyperref}
\usepackage{multirow}

\sisetup{round-mode      = figures,
         round-precision = 3,
         group-separator = {,}}

\newtheorem{theorem}{Theorem}
\newtheorem{problem}{Problem}
\newtheorem{example}{Example}
\newtheorem{assumption}{Assumption}

\newtheorem{lemma}{Lemma}
\newtheorem{definition}{Definition}
\newtheorem{remark}{Remark}

\DeclareMathOperator*{\argmin}{argmin}

\newcommand{\N}{\mathcal{N}}
\newcommand{\E}{\mathcal{E}}

\newcommand{\tree}{\mathcal{T}}
\newcommand{\strategytree}{\pi}

\newcommand{\modesetprime}{Q'}

\newcommand{\andortree}{\mathcal{T}}
\newcommand{\subtree}{\andortree_{sub}}
\newcommand{\UCB}{\textsc{ucb}}
\newcommand{\Q}{\mathcal{Q}}
\newcommand{\SaBRS}{{SaBRS}\xspace}
\begin{document}
\maketitle
\begin{abstract}
    This paper introduces a sampling-based strategy synthesis algorithm for nondeterministic hybrid systems with complex continuous dynamics under temporal and reachability constraints. We model the evolution of the hybrid system as a two-player game, where the nondeterminism is an adversarial player whose objective is to prevent achieving temporal and reachability goals. The aim is to synthesize a winning strategy -- a reactive (robust) strategy that guarantees the satisfaction of the goals under all possible moves of the adversarial player. Our proposed approach involves growing a (search) game-tree in the hybrid space by combining sampling-based motion planning with a novel bandit-based technique to select and improve on partial strategies. We show that the algorithm is probabilistically complete, i.e., the algorithm will asymptotically almost surely find a winning strategy, if one exists. The case studies and benchmark results show that our algorithm is general and effective, and consistently outperforms state of the art algorithms.
\end{abstract}
\begin{IEEEkeywords}
 	Hybrid Logical/Dynamical Planning and Verification, Formal Methods in Robotics and Automation, Planning under Uncertainty, Motion and Path Planning
\end{IEEEkeywords}
\section{Introduction}
\IEEEPARstart{R}{eactive} synthesis is the problem of generating a control strategy that enables a robot to \emph{react} to uncertainties on the fly to guarantee satisfaction of complex requirements. The requirements are often expressed in \textit{temporal logic} (TL) such as \textit{linear TL} (LTL) \cite{Baier2008} for specification on the sequence of events and \textit{metric interval TL} (MITL) and \emph{signal TL} (STL) \cite{maler2004monitoring} for dense-time specifications. Although reactive synthesis is known to be hard, it is an active area of research due to its applications in \emph{safety-critical} and \emph{time-critical} systems such as autonomous driving, search-and-rescue, and surgical robotics \cite{kress2018synthesis}. Reactive synthesis is often studied in the discrete setting, where the dynamics are abstracted to a finite model \cite{He2017reactive, Muvvala2022collaborate, Vasile2020reactivesynthesis, hadas2012reactive}. For complex and uncertain dynamics with dense-time requirements, however, such abstractions are either unavailable or too coarse (in both space and time), preventing accurate analysis and completeness guarantees. This work focuses on reactive synthesis for such systems and aims to develop an algorithm with correctness and completeness guarantees.

A powerful and expressive model that represents complex robotic systems under uncertainty with TL specifications is \textit{Nondeterministic Hybrid Systems} (NHS) \cite{NHS}. NHS allows both continuous and discrete dynamics via discrete modes that contain continuous dynamics and discrete switching between the modes. An NHS can be viewed as the composition of a robot's continuous dynamics with its environment and TL requirements, where the continuous dynamics and dense-time requirements are captured within each discrete mode, and switching between modes either represents changes in the dynamics or environment, or capture the requirements on the sequence of events. In this view, the satisfaction of the requirements reduces to a reachability objective for the NHS, and hence, the problem becomes synthesis of a strategy that guarantees reachability under all uncertainties.

\begin{example}
\label{ex:searchandrescue}
Consider a search-and-rescue scenario, where a building is on fire, in which there may be a trapped human that needs to be rescued, as depicted in Fig.~\ref{fig:searchandrescue}.
To aid the search for the human, a rover with second-order car dynamics is tasked with searching and mapping every room of the building within $2$ minutes. If a room is unblocked, the rover must search it in $10$s. If the human is found, the robot must protect the human until the rescue team arrives. If all rooms are found to be blocked or no human is found, the rover must go to the exit zone in green within $20$s to report the map to the rescue team. In this example, the uncertainty is in the environment, where the rooms may or may not be blocked or contain a human.
\end{example}


\begin{figure}[t]
    \centering
    \begin{subfigure}{0.35\textwidth}
    \includegraphics[width=\linewidth, height=0.162\textheight]{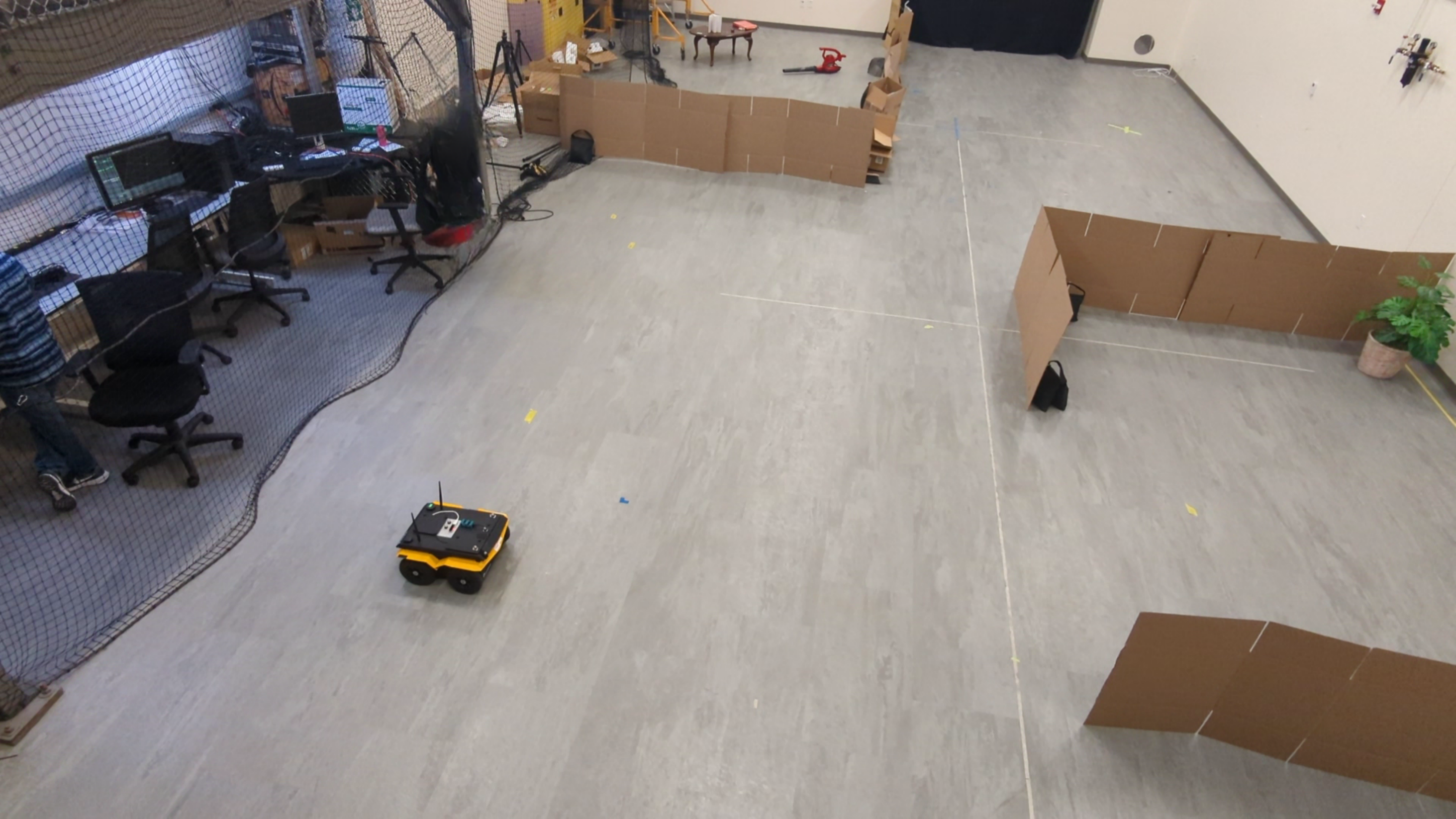}
    \end{subfigure}
    \begin{subfigure}{0.121\textwidth}
    \includegraphics[width=\linewidth, trim={19.7cm 2.8cm 18.1cm 2.0cm},clip]{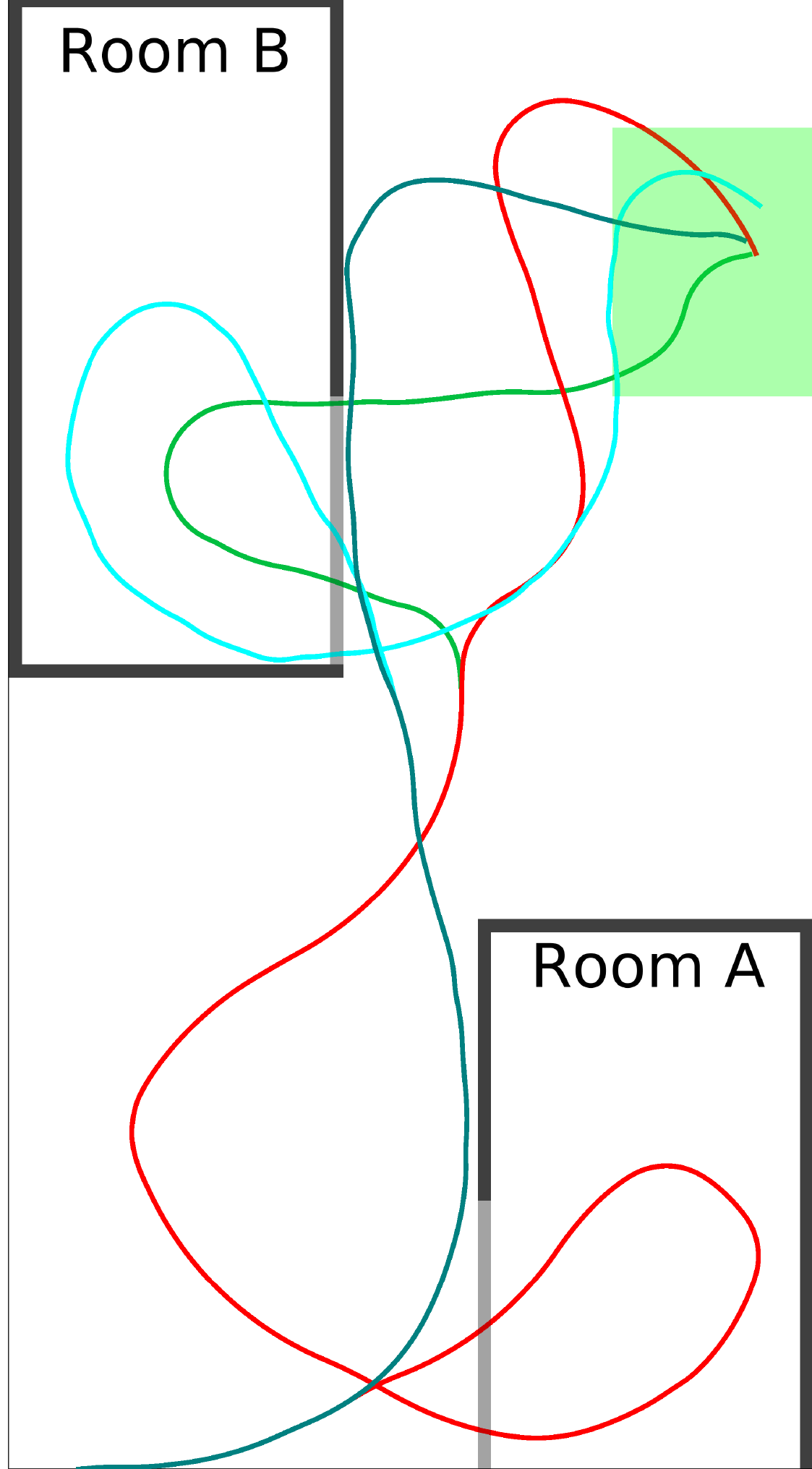}
    \end{subfigure}
\caption{\small Search-and-rescue scenario. Left: a robot is navigating a building with unknown room blockage and human states. Right: computed robot strategy that reacts to all possibilities of room blockages and human presence in each room. Each colored trajectory is a possible evolution of the reactive strategy.}
\label{fig:searchandrescue}
\vspace{-4.3mm}
\end{figure}

Our approach is based on a game-theoretic interpretation of the problem.  We model nondeterminism as an adversarial player that attempts to prevent the robot from achieving its temporal and reachability objectives. This game is in the hybrid space, which is infinite and uncountable. Therefore, finite game techniques that are common in abstraction-based approaches are not applicable here.  Instead, we aim to synthesize a strategy directly in the hybrid state space by iteratively constructing a game tree and exploring ``promising'' strategies.  This however poses two main challenges: (i) construction of the game tree with nonlinear continuous dynamics and (ii) the \emph{exploration-exploitation} dilemma.  To deal with challenge (i), we take inspirations from the tremendous success of sampling-based techniques in motion planning.  To overcome challenge (ii), we adapt multi-armed bandit methods developed for planning under uncertainty problems. We devise an algorithm, called 
\emph{Sampling-based Bandit-guided Reactive Synthesis} (\SaBRS), that uses a novel bandit-based method to select a strategy in the game tree for expansion and employs random sampling to grow this strategy. We show that the algorithm is probabilistically complete, i.e., the algorithm asymptotically almost surely finds a strategy that guarantees satisfaction of the objectives if one exists.

The contributions of this paper are threefold: (i) \SaBRS, a novel sampling-based reactive synthesis algorithm for NHS, (ii) proof of probabilistic completeness of \SaBRS, (iii) a series of benchmarks and case studies, including real robot demonstrations, that illustrate the generality and efficacy of \SaBRS. Results show that \SaBRS consistently finds solutions up to an order of magnitude faster than the state of the art. To the best of our knowledge, this is the \emph{first} probabilistically-complete reactive synthesis algorithm for NHS.
\subsection{Related Work}
Sampling-based algorithms have emerged as powerful tools for kinodynamic motion planning for robotic systems with complex temporal goals and nonlinear or hybrid dynamics \cite{Kingston2018samplingbased, kinoRRT, Bhatia2010, HyRRT}. These techniques are typically used for deterministic systems and, only recently, extended to stochastic models \cite{CC-RRT, Ho2022gbt, Ho2023simba, Theurkauf2023ETmotionplanning, agha2014firm}. Nondeterminism, where no probability distributions are available, is often not considered. In this work, we utilize sampling-based methods to achieve reachability objectives for nonlinear hybrid systems with nondeterminism.

A common approach to handle nondeterminism is to model it as an adversarial player in a game setting.  Reactive synthesis is based on this view and typically studied in discrete games \cite{kress2018synthesis, He2017reactive, Muvvala2022collaborate}. 
When applied to continuous systems, however, they require finite abstraction, which is difficult to obtain for complex dynamics. In the continuous domain, techniques based on Hamilton-Jacobi analysis, contraction theory, and counterexample-guided synthesis have been employed to provide robust controllers with guarantees on system behavior \cite{fastrack, hybridgames, Singh2017contractiontheory ,Tsukamoto2021contractiontheory, Raman2015reactivesynthesis}. However, these methods are designed for continuous disturbance and cannot handle discrete nondeterminism. In this work, we formulate the problem as a two-player minimax game directly in the hybrid space and propose an algorithm for efficient reactive synthesis with formal guarantees.

The work that most closely relates to ours is \cite{Lahijanian2014}. It considers the same problem and proposes a two-phase sampling-based strategy planner that performs exploration in the first phase and strategy improvement in the second. The algorithm is highly dependent on the quality of strategies found in the first phase, and since it cannot return to the first phase, it is incomplete. In this work, we develop a probabilistically complete algorithm that continually improves strategies.
\section{Problem Formulation}

In this work, we consider complex robotic systems under uncertainty with temporal and reachability objectives. Specifically, we focus on uncertanties due to nondeterministism or discrete disturbances. A general modeling framework that allows for accurate representation of such systems and objectives simultaneously is nondeterministic hybrid systems.

\begin{definition}[NHS] 
    \label{def: NHS}
    A \emph{nondeterministic hybrid system} (NHS) is a tuple $H = (S, s_0, U, I, F, E, G, J, S_{goal}, R)$, where
    \begin{itemize}
        \item $S = Q \times X$ is the hybrid state space which is the Cartesian product of a finite set of discrete modes, $Q = \{q_1, q_2, \ldots, q_m\}$ for $m \in \mathbb{N}$, with a set of mode-dependent continuous state spaces $X = \{X_q \subseteq \mathbb{R}^{n_q} \mid q \in Q \}$, where $n_q \in \mathbb{N}$,
        \item $s_0 \in S$ is the initial state,
        \item $U = \{U_q \subset \mathbb{R}^{m_q} \mid q \in Q \}$, where $m_q \in \mathbb{N}$, is the set of mode-dependent control spaces,
        \item $I = \{I_q : X_q \rightarrow \{\top, \bot\} \mid q \in Q\}$ is the set of invariant functions, 
        \item $F = \{F_q : X_q \times U_q \rightarrow X_q \mid q \in Q\}$ is the set of flow functions that describe the continuous dynamics of the robot in each mode,
        \item $E \subseteq Q \times Q$ is the discrete transitions between modes,
        \item $G = \{G_{q \modesetprime} : X_q \rightarrow \{\top, \bot\} \mid \modesetprime \in 2^Q, \; \forall q' \in \modesetprime, (q, q') \in E \}$ is a set of guard functions that, given hybrid state $(q,x)$, $G_{q \modesetprime}(x) = \top$ triggers a transition from mode $q$ to a mode in $\modesetprime$. If
        $|\modesetprime| = 1$, the transition is deterministic; otherwise, it is nondeterministic,  
        \item $J = \{J_{q q'} : X_q \rightarrow X_{q'} \mid (q, q') \in E\}$ is a set of jump functions that, once a transition $(q,q')$ is triggered by a guard at state $x \in X_q$, $J_{qq'}(x) \in X_{q'}$ resets the continuous state in mode $q'$, 
        \item $S_{goal} \subseteq S$ is a set of goal states that define the reachability objective, and
        \item $R : S \rightarrow \{\top, \bot\}$ is the reachability indicator function, where $R(s) = \top$ if $s \in S_{goal}$, and $R(s) = \bot$ otherwise.
    \end{itemize}
\end{definition}

\noindent
In this definition of NHS, nondeterminism is in the discrete transition.  
The temporal constraints are typically encoded in the invariant $I$ and guard $G$ functions, and the reachability objective is explicitly defined by set $S_{goal}$ and its indicator function $R$.
The evolution of the NHS is determined by a control strategy, which picks control actions for the system.

\begin{figure}[t]
    \centering
    \includegraphics[width=0.75\linewidth, , trim={0cm 3cm 0cm 0cm},clip]{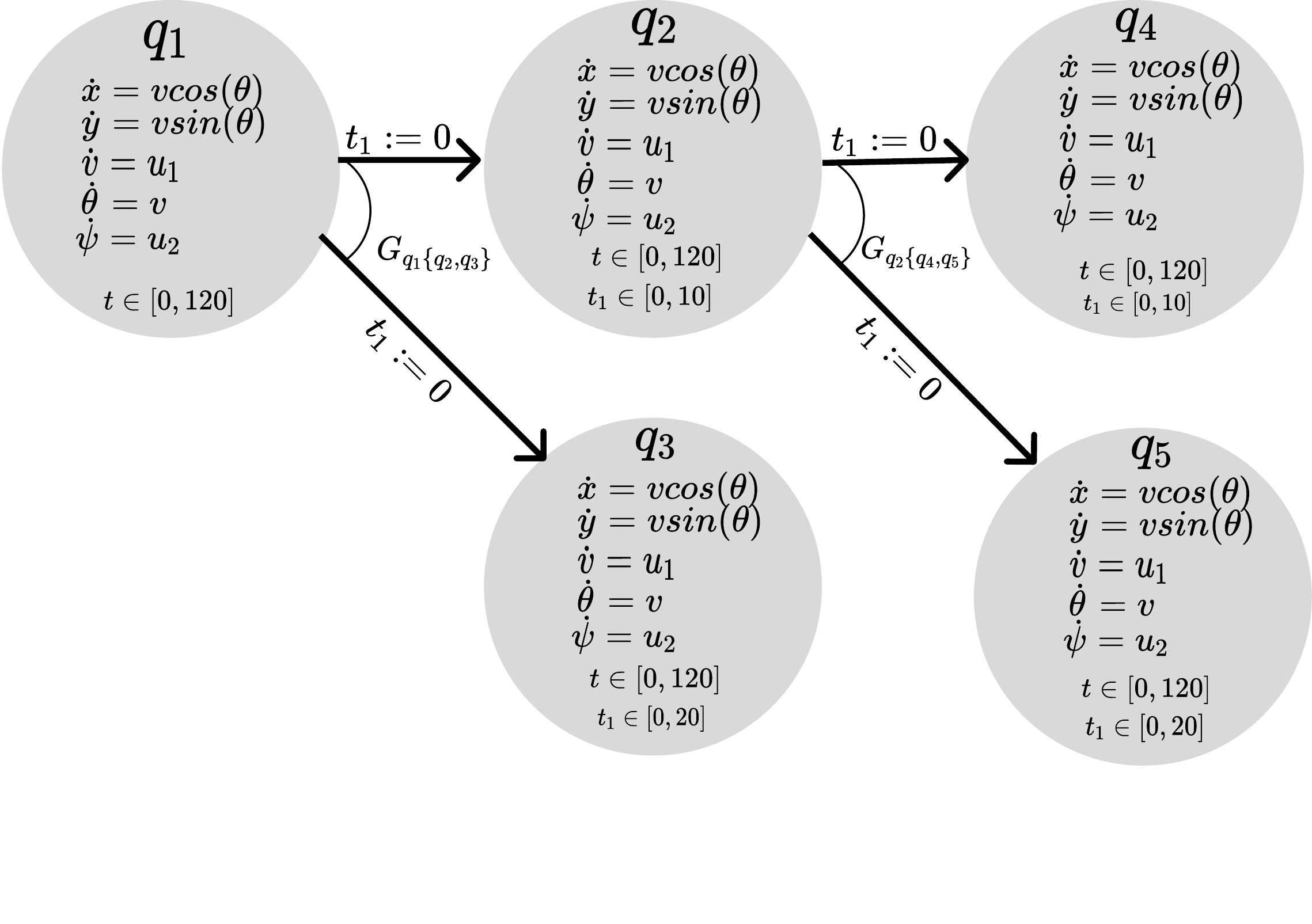}
    \caption{\small NHS for single room version of Example~\ref{ex:NHS model}.}
    \label{fig:searchrescueNHS}
    \vspace{-4.5mm}
\end{figure}

\begin{definition}[Control Strategy]
    A control strategy $\pi :  S \rightarrow \cup_{q\in Q} U_q$ is a function that, for hybrid state $s = (q,x)$, chooses an input control $u \in U_q$.
\end{definition}

Under control strategy $\pi$, the evolution of $H$ is as follows. From initial state $s_0 = (q_0, x_0)$, the continuous component, $x_t$, of hybrid state $s_t=(q_0, x_t)$ evolves according to dynamics $\dot{x} = F_q(x_t, \pi(q_0,x_t))$ until a guard in mode $q_0$ is triggered. Let $\tau$ denote the time that the system first hits the guard, i.e., $G_{q_0 \modesetprime}(x_{\tau}^-) = \top$. Then, the system's discrete dynamics (mode) makes a transition to $q' \in \modesetprime$ nondeterministically, and the continuous state is updated according to the jump function, i.e., $x_\tau^+ = J_{q_0 q'}(x_\tau^-)$.
In mode $q'$, the system's continuous component evolves according to flow $F_{q'}$ from $x_\tau^+$. This process continues until either the invariant function $I$ becomes false, which indicates that temporal constraints are violated, or the reachability indicator function $R$ becomes true, which indicates that the reachability objective is satisfied.  

Note that the NHS in Def.~\ref{def: NHS} assumes deterministic continuous dynamics in every mode. It is the discrete dynamics (switching between the modes) that is subject to uncertainty, i.e., nondeterministic transitions.

\begin{example}
\label{ex:NHS model}
The NHS that models a simplified version of Example~\ref{ex:searchandrescue} (Fig.~\ref{fig:searchandrescue}) is shown in Fig.~\ref{fig:searchrescueNHS}.  Time parameters $t$ (global clock) and $t_1$ (local clock) are added to the continuous state $x$.  Note that $t_1$ resets at every discrete transition; hence, the temporal constraints on the system are captured in the invariant $I_{q_{\cdot}}$ in each mode. The positions that enable the robot to observe the status of the room door in mode $q_1$ represents a guard region that triggers a nondeterministic transition for closed (mode $q_3$) or open (mode $q_2$) door. Searching the room represents a guard region that transitions the system to the next mode with a trapped human (mode $q_4$) or no human found (mode $q_5$). 
 These are nondeterministic guards because the status of the door/room is unknown. 
By reaching the green region in mode $q_3$ and $q_5$ or finding a human in mode $q_4$, $R$ becomes true, which satisfies the timed reachability objective.
\end{example}

To guarantee existence and uniqueness of solution (trajectory) in each mode and enable completeness analysis, we assume the flow and jump functions are Lipschitz continuous.

\begin{assumption}[Lipschitz Continuity]
    \label{assumption: Lipschitz}
    For every mode $q \in Q$, flow function $F_q$ is Lipschitz continuous in both continuous state and control, i.e., there exists constants $L_x, L_u > 0$ such that  $\forall x_1, x_2 \in X_q$ and $\forall u_1, u_2 \in U_q$,
    \begin{align*}
        \|F_q(x_1, u_1) - F_q(x_2, u_2) \| \leq L_x \|x_1 - x_2 \| + L_u \|u_1 - u_2\|.
    \end{align*}
    Further, for every transition $(q,q') \in E$, jump function $J_{qq'}$ is Lipschitz continuous in the continuous state, i.e., $\forall x_1,x_2 \in X_q$, there exists a constant $K_x > 0$ such that
    \begin{align*}
        \|J_{q q'}(x_1) - J_{q q'}(x_2) \| \leq K_x \|x_1 - x_2 \|.
    \end{align*}
\end{assumption}

While Assumption~\ref{assumption: Lipschitz} guarantees that the continuous state trajectories are unique in each mode given a control strategy $\pi$, multiple hybrid state trajectories are still possible due to nondeterminism in the guards, i.e., nondeterminism in the discrete dynamics. In this work, we seek control strategies that are robust to these nondeterministic possibilities.
That is, the control strategy guarantees the completion of reachability and temporal objectives by considering all possible outcomes. Such strategies are called winning.

\begin{definition}[Winning Control Strategy]
    For NHS $H$,
    control strategy $\pi^*$ is \emph{winning} if every hybrid state trajectory induced by $\pi^*$ terminates in $S_{goal}$.
\end{definition}
\noindent
In this work, our goal is to find a winning control strategy. 

\begin{problem}[Reactive Synthesis]
    \label{prob:winning strategy}
    Given a nondeterministic hybrid system $H$ as in Def.~\ref{def: NHS} with initial state $s_0$ and a goal set $S_{goal} \subseteq S$, synthesize a winning control strategy $\pi^*$ that guarantees reaching $S_{goal}$ from $s_0$.
\end{problem}

\begin{remark}
    The (robust) hybrid system reachability problem formulated in Problem~\ref{prob:winning strategy} captures a large class of uncertain systems with (finite) TL (e.g., LTLf \cite{ltlf}, co-safe LTL \cite{Kupferman1999ModelCO}, and MITL and STL \cite{maler2004monitoring}) objectives, where the hybrid system is the Cartesian product of an uncertain continuous system with the automaton that represents the TL objectives.
\end{remark}


\begin{figure*}[ht]
    \centering
    \includegraphics[width=\linewidth, trim={0.02cm 2.72cm 0.02cm 0.6cm}, clip]{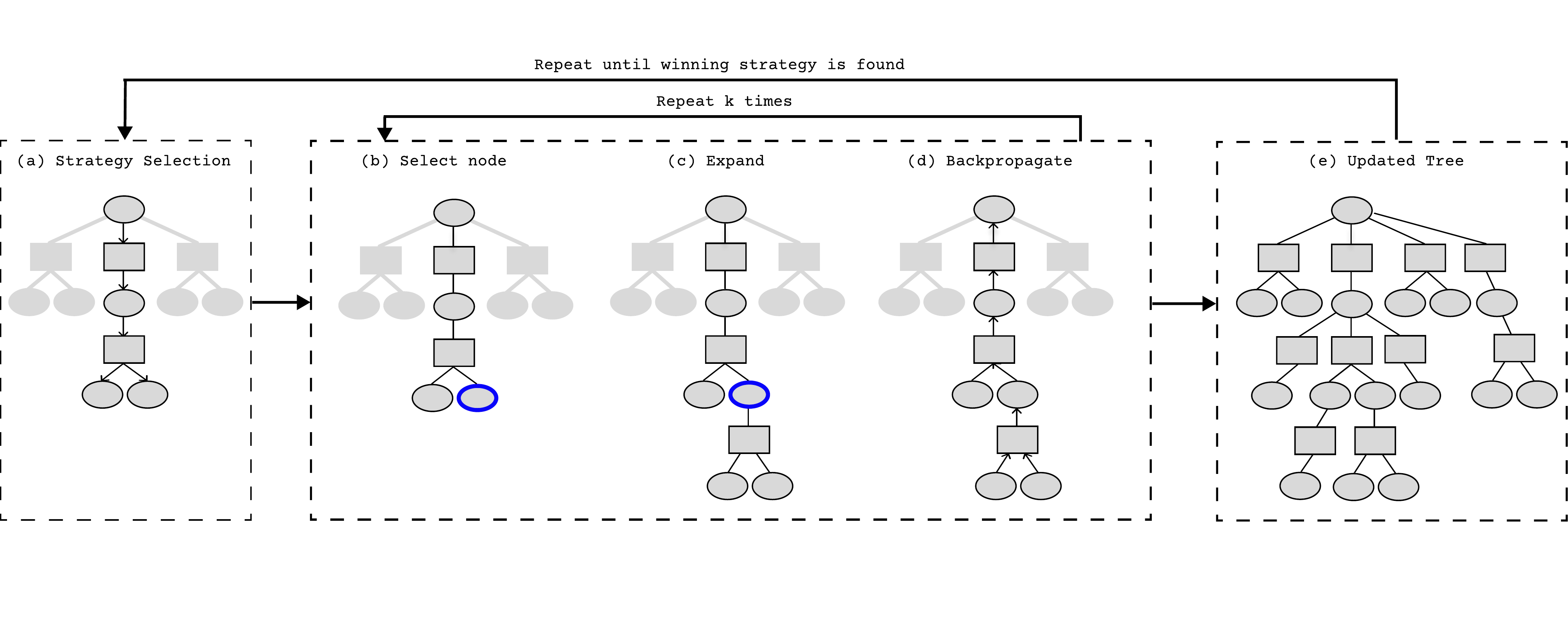}
    \caption{Illustration of SaBRS algorithm. Circles represent \texttt{OR} (\texttt{MIN} player) nodes, while squares represent \texttt{AND} (\texttt{MAX} player) nodes.}
    \label{fig:framework}
    \vspace{-3mm}
\end{figure*}
\section{Background}

\subsection{Game Trees, \texttt{AND/OR} trees, and Strategies}
To approach Problem~\ref{prob:winning strategy}, we use the concept of \textit{game trees}. A game tree is a tree $\tree$ whose nodes and edges represent game venue positions and game moves, respectively~\cite{AIbook}. At each node, a set of inputs are available. Each node-input pair results in a set of children in the tree. For a tree $\tree = (\N,\E)$ with a set of nodes $\N$ and edges $\E$, we denote by $\E(n)$ the set of child nodes of $n \in \N$.

An \texttt{AND/OR} tree $\andortree$ models a game tree as a two-player min-max game\cite{AIbook}. The players are \texttt{MIN} and \texttt{MAX}.
The position resulting from \texttt{MIN} and \texttt{MAX} moves are represented in the tree by \texttt{OR} and \texttt{AND} nodes, respectively. Moves of the game proceed in strict alternation between \texttt{MIN} and \texttt{MAX} until no further moves are allowed by the rules of the game. After the last move, \texttt{MIN} receives a cost which is a function of the final position. The objective of \texttt{MIN} is to minimize the cost, while \texttt{MAX}'s goal is to maximize the cost.

\begin{definition}[Subtree]
     $\subtree = (\N_s,\E_s)$ is a \emph{subtree} of an \texttt{AND/OR} tree $\andortree = (\N,\E)$ if the following conditions hold:
     \begin{itemize}
         \item $\N_s \subseteq \N$ and $\E_s \subseteq \E$,
         \item $n \in \N_s$ is the root of $\subtree$ if $n$ is the root of $\andortree$,
         \item $|\E_s(n)| = 1$ if $n$ is an \texttt{OR} node and $\E(n) \neq \emptyset$, i.e., only one move of the \texttt{MIN} player is available in $\subtree$,
         \item $\E_s(n) = \E(n)$ if $n$ is an \texttt{AND} node, i.e., all the moves of the \texttt{MAX} player are available in $\subtree$.
     \end{itemize}
\end{definition}

\begin{definition}[Strategy]
    A strategy over a game tree is a mapping from a node to an element of the input set available at the node. A strategy can be represented as a subtree of an \texttt{AND/OR} tree. We refer to this as a \emph{strategy subtree}.
\end{definition}

In a \emph{reachability} game, the objective is to reach a set of target positions $T \subseteq \N$. Then, after the last move, the \texttt{MIN} player (root node of the \texttt{AND/OR} tree) is penalized for having leaf nodes outside $T$ through the cost function. For a given strategy subtree, if all the leaf nodes are in $T$, the root gets a zero cost; otherwise, the cost is strictly positive.

\begin{definition}[Winning Strategy]
    A strategy over a game tree is called \emph{winning} if the root node of its \texttt{AND/OR} tree representation has a cost of zero.
\end{definition}

\subsection{Multi-Armed Bandits and Upper Confidence Bounds}

In the classic multi-armed bandit decision-making problem, an agent is presented with multiple arms (actions). When the agent chooses an arm (action), it receives a penalty according to a function that is hidden from the agent. The agent's goal is to minimize cost. Since the agent does not know how the cost is generated, it needs to trade-off between \textit{exploration} of actions and \textit{exploitation} of the action that seems the least costly.

The agent adopts a policy on how to choose actions. The regret of a policy  is the loss caused by not always choosing the best action. As the agent pulls more arms over time, it gains more information; hence, changing the regret of its policy over time.
A policy is said to \textit{resolve} the exploration-exploitation tradeoff if its regret growth rate is within a constant factor of a theoretical lower bound \cite{finiteregret}. 

A theoretically sound algorithm whose finite-time regret is well-studied is the \emph{Upper Confidence Bound} (UCB1) algorithm \cite{finiteregret}.  UCB1 resolves the exploration-explotation tradeoff by using a bias term.  Specifically, the UCB1 defines a cost for choosing action $a$ according to
\begin{align}
    \label{eq:ucb1}
    \UCB(a) = \hat{C}(a) - e \sqrt{2 \, \ln(N) / {N_a}},
\end{align}
where $\hat{C}(a)$ is the current estimated cost of taking action $a$, $N$ is the number of trials, and $N_a$ is the number of times action $a$ has been chosen. Exploration constant $e$ is a non-negative constant that determines the relative ratio of exploration to exploitation, with $e = 1$ in \cite{finiteregret}. Term $\sqrt{2\ln(N) / N_a}$ is known as a \emph{bias} sequence that controls the probability that $a$ is chosen as $N$ increases. For real valued $\hat{C}(\cdot)$, as $N$ increases, every action $a$ is eventually taken since the bias term becomes a very large value.
\section{Sampling-based Bandit-guided Reactive Synthesis (\SaBRS) Algorithm}
\label{sec:methodology}
In this section, we present our novel approach to Problem~\ref{prob:winning strategy}. Our method is based on modeling the NHS as a two-player game, where the nondeterminism is an adversarial player whose objective is to prevent the robot player from achieving the temporal and reachability goals. The game venue is the hybrid state space. We explore this game venue by combining sampling-based techniques with game-theoretic approaches. Specifically, we use a bandit-guided strategy selection method to decide on which parts of the game venue to explore, and use sampling for exploration.

Our algorithm iteratively grows a search tree in the hybrid state space of $H$. This is done  by growing a search tree $\tree = (\N,\E)$ rooted at initial state $s_0$ and extending the tree based on the semantics of $H$ (details in Sec. \ref{sec:tree extension}). Each node $n \in \N$  of the tree is a tuple $n = \langle s, N, \mathbf{u} \rangle$, where $s$ is a hybrid state, $N$ is the number of times that node $n$ has been visited, and $\mathbf{u}$ is the set of piecewise constant control inputs previously selected at $n$. 
We model this tree as an \texttt{AND/OR} tree, where the robot is the \texttt{MIN} player which chooses a control $(u,t) \in n.\mathbf{u}$ at node $n$, followed by the \texttt{MAX} player, which chooses a child of the node-control pair $(n, (u,t))$ with the highest cost. Intuitively, the \texttt{MAX} player models the nondeterminism in $H$. Then, a strategy subtree on the \texttt{AND/OR} tree is a (piecewise constant) control strategy in the hybrid space. The goal of \texttt{MIN} is to find a winning strategy subtree, which has all of its leaf nodes in $S_{goal}$. Computing a winning strategy on the game tree is therefore equivalent to synthesizing a (piecewise constant) winning control strategy, solving Problem~\ref{prob:winning strategy}. A major challenge is how to effectively extend the game tree in the game venue (exploration) such that it turns a ``promising'' strategy subtree to a winning strategy (exploitation), if one exists.

To construct this game tree efficiently, we propose a reactive synthesis algorithm that consists of two main components: strategy subtree selection and strategy expansion/improvement. 
The algorithm, called \textit{Sampling-based Bandit-guided Reactive Synthesis} (\SaBRS), is shown in Alg.~\ref{alg:framework} and depicted in Fig.~\ref{fig:framework}. 
\SaBRS uses a novel bandit-based action selection methodology to select a strategy in the \texttt{AND/OR} tree for expansion. Then, the strategy expansion component uses random sampling to grow this selected subtree. 
This alternation of selecting promising strategy subtrees and expanding on them combines the exploration-exploitation properties of bandit-based techniques at the strategy-level with the effectiveness of sampling-based motion planners, resulting in an efficient and probabilistically complete algorithm (see Sec.~\ref{sec:analysis}).
The algorithm terminates when a winning strategy is found.
 \setlength{\textfloatsep}{0.75pt}
 \setlength{\intextsep}{0.75pt}
\begin{algorithm}[ht]
    \caption{\SaBRS Algorithm}
    \label{alg:framework}
    \SetKwInOut{Input}{Input}\SetKwInOut{Output}{Output}
    \Input{NHS $H$, Initial state $s_0$, Expansion ratio $k$, Exploration constant $e$, Time limit $T_{max}$}
    \Output{Winning Reactive Strategy $\pi^*$ from $s_0$}
    $n_0 \gets \langle s_0, 0, \emptyset \rangle$\\
    $\tree = (\mathcal{N} \gets \{n_0 \}, \mathcal{E} \gets \emptyset)$\\
    \While{$\underset{(u,t) \in n.u}{\min}Q(n_0, (u,t)) > 0$ and time $< T_{max}$}
    {
        $\strategytree \gets \texttt{UCB-ST}(n_0, e)$\\
    \For{$j = 1 \rightarrow k$}
    {
        $\strategytree \gets \texttt{Explore}(\strategytree)$ 
    }
    $\tree \gets \tree \cup \strategytree$
    }
    $\strategytree^* \gets \texttt{UCB-ST}(n_0, 0)$\\
    \Return $\pi^*$
\end{algorithm}

\subsection{Strategy Tree Selection with Upper Confidence Bounds}

Since the search tree can become extremely large, we wish to bias our search towards the strategies that are promising, i.e., close to a winning strategy. Hence, in each iteration of planning, the algorithm evaluates and selects a strategy subtree $\strategytree$ of the search tree as depicted in the Frame (a) of Fig.~\ref{fig:framework}. 
The evaluation of a strategy subtree $\pi$ is done based on a cost function that assigns to node $n$ the cost
\begin{align}
    \label{eq:costfunction}
    C^\pi(n) = 1 - \frac{|\mathcal{C}_{goal}^\pi(n)|}{|\mathcal{C}_{all}^\pi(n)|} , 
\end{align}
where $\mathcal{C}_{all}^\pi(n)$ is the set of all leaf nodes of the subtree $\pi$ rooted at $n$, and 
$\mathcal{C}^\pi_{goal}(n) \subseteq \mathcal{C}_{all}^\pi(n)$ is the subset of nodes that are in $S_{goal}$. Intuitively, the cost of $n$ under strategy $\pi$ is the portion of leaf nodes that are not in $S_{goal}$. When $C^\pi(n) = 1$, no branches of subtree $\pi$ from $n$ end in $S_{goal}$, and when $C^\pi(n) = 0$, all the branches of the subtree end in $S_{goal}$ from $n$, i.e., $\pi$ is a winning strategy for $n$.

\begin{remark}
    It is important to note that various cost functions are possible in this framework, and the effectiveness of a cost function may be problem dependent. The only requirement of a cost function is that $C^\pi(n)=0$ if $\pi$ is a winning strategy from $n$, and $C^\pi(n)>0$ otherwise.
\end{remark}

From \eqref{eq:costfunction}, we define \emph{$\Q$-cost} to be the cost for choosing an input $(u,t)$ at node $n$ and then following strategy $\pi$ at subsequent nodes, i.e.,
\begin{align}
    \label{eq:qcost}
    \Q^\pi(n, (u,t)) = 1 - \frac{\sum_{n' \in children(n,(u,t))}|\mathcal{C}_{goal}^\pi(n')|}{\sum_{n' \in children(n,(u,t))}|\mathcal{C}_{all}^\pi(n')|}.
\end{align}
This $\Q$-cost allows us to evaluate the relative cost of each input $(u,t)$ at node $n$. Note that when $\pi(n) = (u,t)$, the $\Q$-cost is equivalent to $C^\pi(n)$, i.e., $\Q^\pi(n,\pi(n)) = C^\pi(n)$. The minimization of $\Q$-cost at each node thus provides us with strategies that are seemingly closer to a winning strategy subtree. However, this leads us to the classical \textit{exploitation-exploration dilemma}, since a strategy subtree with low cost may not always be the best strategy to choose, since a strategy may have a low cost but it may be difficult or impossible to be extended into a winning strategy subtree due to, e.g., one of its leaf nodes being stuck in a ``dead end". 

Therefore, we choose a strategy subtree by treating the control selection problem at each node as a separate multi-armed bandit problem to solve this exploration-exploitation tradeoff. To this end, we propose an adaptation of the UCB1 algorithm to be used in the context of strategy subtree selection. We call this new algorithm \emph{UCB for Strategy Tree selection} (\texttt{UCB-ST}). The algorithm is shown in Alg.~\ref{alg:UCB-ST}. It first initializes the chosen strategy tree $\strategytree$ with the root node $n_0$ (Line $4$ Alg.~\ref{alg:framework}). 
From $n_0$, it selects control inputs $(u,t)_{sel}$ in the \texttt{AND/OR} tree according to the UCB1 criterion (Line 3 in Alg.~\ref{alg:UCB-ST}) adapted from \eqref{eq:ucb1}:
\begin{align}
    \label{eq:UCB1}
    \UCB(n, (u,t)) = \Q^*(n, (u,t)) - e\sqrt{ 2 \ln (n.N) / n.N_{(u,t)} },
\end{align}
where $\Q^*(n,(u,t)) = \min_\pi \Q^\pi(n,(u,t))$ is the optimal $\Q$-cost, and $n.N_{(u,t)}$ is the number of times the control input $(u,t)$ has been selected at node $n$. All children nodes of $(n,(u,t)_{sel})$ are added to $\strategytree$, and control inputs for each children are again selected according to \eqref{eq:UCB1} (Lines 5-6). This process repeats until a strategy subtree $\pi$ of $\tree$ is obtained, which is when all the leaf nodes of a subtree are reached. \texttt{UCB-ST} allows the tree to grow in more promising parts of the tree, while still allowing for exploration of parts that seem less promising but might eventually lead to a winning strategy.

\begin{algorithm}[ht]
    \caption{\texttt{UCB-ST}($n, e$)}
    \label{alg:UCB-ST}
     $\pi \gets \{n\}$; $n.N = n.N + 1$\\
    \If{$n$ is not a leaf node}
    {
        $(u,t)_{sel} \gets \argmin_{(u, t) \in n.u} \UCB(n, (u,t))$ using \eqref{eq:UCB1} \\
        $\pi(n) = (u,t)_{sel}$\\
        \For{$n' \in children(n, (u,t)_{sel})$}
        {
            $\strategytree \gets \strategytree \cup \texttt{UCB-ST}(n', e)$
        }
    }
    \Return $\pi$
\end{algorithm}
\subsection{Strategy Improvement with Sampling-based Expansion}
\label{sec:tree extension}

A strategy subtree $\strategytree$ is extended in a sampling-based tree expansion manner, by growing the tree in the hybrid state space. This sampling-based expansion technique is inspired by motion planning algorithms. Pseudocode for our exploration algorithm is shown in Alg.~\ref{alg:motionplanner} and depicted in Frames (b)-(d) of Fig.~\ref{fig:framework}. In each iteration of exploration, a node $n$ in $\strategytree$ that has non-zero cost is first randomly sampled. Note that zero cost nodes already have a winning subtree, and do not need to be expanded further. Let $n.s = (q,x)$. Then, a control $u \in U_q$ and time duration $t$ are randomly sampled, and the node's continuous state $x$ is propagated by $F_{q}$. Any tree-based sampling-based planner that supports kinodynamic constraints (e.g., RRT or EST~\cite{Kingston2018samplingbased}) can be used in this step.

During propagation, the invariant $I_q$ checks the validity of the generated trajectory, and reachability indicator $R$ checks if the trajectory visits $S_{goal}$.  If a guard $G_{q \modesetprime}$ is enabled during propagation at continuous state $x'$, propagation is terminated and, for every $q' \in \modesetprime$,
node $n' = \langle (q',J_{qq'}(x')), 0, \emptyset \rangle$ is created.  If no guard is triggered and $I_q$ remains true for the entire duration $t$, only one new node is created. Then, the control $(u,t)$ is added to the set of sampled controls $n.\mathbf{u}$, and the new leaf nodes are added to the tree. Finally, the cost of nodes in the strategy is updated by backpropagation using \eqref{eq:costfunction}.

This expansion step is repeated $k$ times for each strategy subtree selection iteration to ensure that sufficient exploration is performed for a strategy subtree. Note that $k$ is an input to \SaBRS. The choice of $k$ can affect the efficiency of the algorithm. When $k$ is small, the algorithm quickly switches between promising strategy subtrees but may not expand a strategy subtree sufficiently. On the other hand, when $k$ is large, the algorithm expands a strategy subtree extensively, but switching between strategy subtrees becomes slower.

\begin{remark}
    \SaBRS also works in an anytime fashion, i.e., when given a time limit, \SaBRS returns a control strategy that minimizes root node cost. This also allows us to find partial solutions for problems, in which no winning strategy exists.
\end{remark}
\setlength{\textfloatsep}{2pt}
\begin{algorithm}[t]
    \SetKwInOut{Input}{Input}
    \SetKwInOut{Output}{Output}
    \caption{\texttt{Explore}($\strategytree$)}
    \label{alg:motionplanner}
    $n_{select} \leftarrow$ SampleAndSelect($\strategytree, s_{rand}$)\\
    $u_{rand}, t_{rand} \leftarrow$ SampleControl\&Dur($U_{n.s.q}, (0, T_{prop}]$)\\
    $\mathrm{N}_{new} \leftarrow$ Propagate($n_{select}, u_{rand}, t_{rand}$)\\
    \For{$n_{new} \in  \mathrm{N}_{new}$}
    {
        \uIf {IsValidTrajectory($n_{select}, n_{new}$)}{
            $n_{select}.\mathbf{u} \gets n_{select}.\mathbf{u} \cup \{(u_{rand}, t_{rand})\}$\\
            Add vertex and edge to $\strategytree$\\
            Update costs in $\pi$ by backpropagation
        }
    }
    \Return $\strategytree$
\end{algorithm}
\section{Analysis}
\label{sec:analysis}
In this section, we prove probabilistic completness of our algorithm. Specifically, we consider the case that kinodynamic RRT \cite{kinoRRT} is used as the strategy expansion technique. We begin by defining the notion of probabilistic completeness for algorithms that solve Problem~\ref{prob:winning strategy}.

\begin{definition}[Probabilistic Completeness]
    Given an NHS $H$ as in Def.~\ref{def: NHS}, 
    an algorithm is probabilistically complete if, as the number of iterations $K \rightarrow \infty$, the probability of finding a winning control strategy, if one exists, approaches $1$.
\end{definition}

Next, we prove that our strategy selection methodology repeatedly selects every strategy.
\begin{lemma}
    \label{lemma:UCB-ST}
    Given a search tree $\tree$ and exploration constant $e > 0$, \texttt{UCB-ST} in Alg.~\ref{alg:UCB-ST} selects every strategy subtree infinitely often, i.e., as number of iterations approaches infinity, the number of times every strategy subtree of $\tree$ is selected also approaches infinity.
\end{lemma}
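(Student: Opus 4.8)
The plan is to argue by induction on the depth of the tree that every strategy subtree is selected infinitely often, leveraging the ``infinite exploration'' property of the UCB1 bias term already noted in the background section. The key observation is that \texttt{UCB-ST} builds the selected strategy $\strategytree$ recursively from the root: at each \texttt{OR} node $n$ it picks the control input minimizing \eqref{eq:UCB1}, and at each \texttt{AND} node it descends into all children. So a given strategy subtree is selected exactly when, at every \texttt{OR} node along it, the \texttt{UCB-ST} recursion picks precisely that subtree's designated control input. It therefore suffices to show that at each \texttt{OR} node $n$ with finite input set $n.\mathbf{u}$, every input $(u,t) \in n.\mathbf{u}$ is chosen infinitely often by the $\argmin$ in Line 3 of Alg.~\ref{alg:UCB-ST}, \emph{conditioned on the recursion reaching $n$ infinitely often}.

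First I would establish the base case: the root $n_0$ is visited on every iteration of the outer \texttt{while} loop (Line 4 of Alg.~\ref{alg:framework}), so $n_0.N \to \infty$. Then I would prove the single-node claim: at any \texttt{OR} node $n$ that is reached infinitely often, every input in $n.\mathbf{u}$ is selected infinitely often. Suppose for contradiction some input $(u,t)$ is selected only finitely often; then $n.N_{(u,t)}$ stays bounded while $n.N \to \infty$, so the bias term $-e\sqrt{2\ln(n.N)/n.N_{(u,t)}}$ diverges to $-\infty$, whereas $\Q^*(n,(u',t'))$ is bounded in $[0,1)$ by \eqref{eq:qcost} and the bias terms of the other (infinitely-often-chosen) inputs $(u',t')$ stay bounded away from $-\infty$ because their visit counts grow at least like $n.N / |n.\mathbf{u}|$ (at least one input is chosen each visit, and only finitely many are chosen finitely often). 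Hence eventually $\UCB(n,(u,t)) < \UCB(n,(u',t'))$ for all competing inputs, forcing the $\argmin$ to pick $(u,t)$ again — a contradiction. This is the standard UCB1 argument; I would just note that $\Q^*$ being bounded in $[0,1)$ is what makes it go through, exactly as the background section observes for real-valued $\hat C(\cdot)$.

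For the inductive step, fix a strategy subtree $\strategytree$ and a node $n$ on it at depth $d+1$; its parent-side control inputs along the path from $n_0$ form a finite sequence. By the induction hypothesis the recursion reaches the parent \texttt{OR} node (or, if the parent is an \texttt{AND} node, its parent \texttt{OR} node) infinitely often and picks the correct input toward $n$ infinitely often, so $n$ itself is reached infinitely often; then the single-node claim applies at $n$. Composing over all \texttt{OR} nodes on $\strategytree$ — of which there are finitely many, since $\tree$ is a finite tree at any fixed stage — I conclude the event that \texttt{UCB-ST} picks the designated input at \emph{every} such node occurs infinitely often, which is exactly the event that $\strategytree$ is selected.

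The main obstacle is making the conditioning rigorous: ``input $(u,t)$ chosen infinitely often at $n$'' and ``recursion reaches $n$ infinitely often'' must be threaded together without circularity, since a child is visited only when the parent routes to it, yet the parent's routing frequencies depend on $\Q^*$ values that change as children are expanded. I would handle this by doing the induction strictly top-down on depth (so the parent's behavior is already controlled before analyzing the child) and by noting the single-node claim needs only the \emph{qualitative} facts that $\Q^* \in [0,1)$ and that the visit count of every competing input grows unboundedly — neither of which depends on the limiting values of $\Q^*$. A secondary subtlety worth a sentence is that $\tree$ keeps growing, so new \texttt{OR} nodes and new inputs appear over time; but the lemma is stated for a fixed $\tree$, and once $\tree$ is fixed the set of strategy subtrees and \texttt{OR} nodes is finite, so the argument is unaffected.
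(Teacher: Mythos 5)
Your proposal is correct and takes essentially the same route as the paper's proof: the UCB1 bias term of a neglected input diverges while the $\Q$-cost stays bounded in $[0,1]$, so every input at every node is selected infinitely often, and an induction over the finite tree lifts this to whole strategy subtrees. You are in fact more explicit than the paper about the top-down conditioning on visit counts; the only step you (like the paper) leave implicit is why the designated inputs at sibling branches under an \texttt{AND} node are picked in the \emph{same} call infinitely often, rather than merely each infinitely often on possibly disjoint iterations.
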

\begin{proof}
    Consider a node $n \in \tree$. $\tree$ has a finite number of nodes for a finite number of SaBRS iterations. Input $(u,t) \in n.\mathbf{u}$ is selected according to UCB1 criterion \eqref{eq:UCB1}, which weighs the current $\Q$-cost with the exploration term $e\sqrt{2\ln (n.N)/ n.N_{(u,t)}}$. The exploration term increases if $(u,t)$ is not selected. As $N$ increases, another input $(u,t)' \neq (u,t)$ is always selected only if $\Q$-cost of $(u,t)'$ decreases at a rate faster than the increase in the exploration term of $(u,t)$. However, the cost function is defined such that the minimum Q-cost is $0$, and therefore, for $e > 0$, the exploration term for $(u,t)$ eventually dominates the $\Q$-cost terms. Since the search tree is finite, $|n.\mathbf{u}|$ is finite, and given sufficient iterations, each $(u,t) \in n.\mathbf{u}$ is eventually selected. By induction, every strategy subtree is selected infinitely often.
\end{proof}

Similar to continuous space RRT, we require that a solution with a non-zero radius of clearance exists, as defined below.

\begin{definition}[Clearance]
    For strategy $\pi$, let $Traj^\pi$ be the trajectories of $H$ induced under $\pi$ and $S^\pi = X^\pi \times Q^\pi \subset S$ be the set of hybrid states visited by $Traj^\pi$.  Further, denote by $\mathbb{B}_{\delta}(x)$ the ball centered at point $x$ with radius $\delta \geq 0$.  
    Clearance $\delta_{clear}^\pi$ of $\pi$ is the supremum of radius $\delta$
    such that for every $s = (x,q) \in S^\pi$ and $\forall x' \in \mathbb{B}_{\delta}(x) \subset \mathbb{R}^{n_q}$, it holds that $I_q(x') = I_{q}(x)$ and $G_{qQ'}(x') = G_{qQ'}(x)$.
\end{definition}
\noindent 
Intuitively, $\delta^\pi_{clear}$ defines a tube around each $Traj^\pi$ that contains all hybrid trajectories that follow the same sequence of modes and their continuous components remain $\delta^\pi_{clear}$ close.

We now formally state the main result of our analysis, which is that SaBRS (Alg.~\ref{alg:framework}) is probabilistically complete.

\begin{theorem}[Probabilistic Completeness]
    Alg.~\ref{alg:framework} is probabilistically complete if there exists a winning strategy $\strategytree^*$ with clearance $\delta^{\pi^*}_{clear} > 0$.
\end{theorem}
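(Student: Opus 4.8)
The plan is to lift the standard covering-ball completeness argument for kinodynamic RRT~\cite{kinoRRT} to the \texttt{AND/OR} strategy-subtree setting, using Lemma~\ref{lemma:UCB-ST} to ensure that the partial strategies lying ``on the way'' to $\pi^*$ keep being selected. Fix a winning strategy $\pi^*$ with clearance $\delta := \delta^{\pi^*}_{clear} > 0$ and let $\andortree^*$ be its strategy-subtree representation. First I would argue that $\andortree^*$ is a \emph{finite} tree: every branch reaches $S_{goal}$ in finite time (it terminates), and each triggered guard has at most $|Q| < \infty$ successor modes, so $\andortree^*$ is finitely branching with finite-depth branches and hence finite by K\"onig's lemma. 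Each edge of $\andortree^*$ is a continuous trajectory segment of $H$ in some mode $q$ over a compact time interval; I would cover each such segment by a finite chain of balls of a common radius $r$, with $r$ chosen small relative to $\delta$, the Lipschitz constants $L_x, L_u, K_x$, and $T_{prop}$, so that (i) every ball lies inside the clearance tube and hence all of its points share the same invariant and guard truth values as the reference point, and (ii) from any point of ball $i$ there is a positive-measure set of piecewise-constant controls of duration at most $T_{prop}$ whose flow under $F_q$ ends inside ball $i{+}1$ and stays valid throughout. Claim (ii) is the usual consequence of a Gronwall-type perturbation bound for the (unique, by Assumption~\ref{assumption: Lipschitz}) ODE flow, and at a guard crossing the $K_x$-Lipschitz jump map sends a small ball into a ball while the clearance property forces \texttt{Explore} to create a node for every successor mode in $Q'$, i.e., to generate the whole \texttt{AND}-branching. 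Enumerating the edges of the finite $\andortree^*$ in topological order then yields a finite list of ``stages'' $0,1,\dots,M$.

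Next I would run an induction over stages. Let $A_m$ be the event that at some finite iteration the search tree $\tree$ \emph{realizes} stages $0$ through $m$, meaning that for the part of $\andortree^*$ covered by those stages there are tree nodes inside every prescribed ball, joined by tree edges, with all required \texttt{AND}-children present; $A_0$ holds since $n_0 = s_0$ lies in the first ball. The inductive step is: if $A_{m-1}$ occurs at iteration $K_0$, then $A_m$ occurs almost surely at some later iteration. Once $A_{m-1}$ holds there is a frontier node $n^\dagger \in \tree$ inside the last ball of stage $m-1$ and a partial strategy subtree $\strategytree^\dagger$ of $\tree$ having $n^\dagger$ among its nonzero-cost leaves; since $\tree$ has finitely many strategy subtrees at that time, Lemma~\ref{lemma:UCB-ST} guarantees that \texttt{UCB-ST} selects $\strategytree^\dagger$ (or, after it has been grown, an extension of it) at infinitely many later iterations. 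At each such iteration the $k$ calls to \texttt{Explore} pick $n^\dagger$ with probability at least $1-(1-1/|\N|)^k > 0$, where $|\N|$ is the current number of tree nodes, and, conditioned on that, sample a control and duration steering $n^\dagger$ into the next ball with probability at least the constant $\rho > 0$ established above; the product is strictly positive. Because these favorable selections are infinite in number while $|\N|$ grows at most linearly in the iteration count and $k$ is fixed, the sum of the per-selection success probabilities diverges, so a conditional (L\'evy-type) Borel--Cantelli argument yields that stage $m$ is realized almost surely. (If one \texttt{Explore} batch has to span more than one ball, the same reasoning applies with a smaller but still positive probability.)

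Finally, since there are only $M{+}1$ stages, a finite induction gives that $A_M$ holds almost surely as the number of iterations tends to infinity. When $A_M$ holds, $\tree$ contains a subtree all of whose leaves lie in $S_{goal}$, i.e., a winning strategy subtree; then $\min_{(u,t)\in n_0.\mathbf{u}} Q(n_0,(u,t))$ drops to $0$, the \texttt{while} loop of Alg.~\ref{alg:framework} terminates, and the final call \texttt{UCB-ST}$(n_0,0)$ returns a winning control strategy. Hence, whenever a winning strategy with positive clearance exists, the probability of finding one approaches $1$, so \SaBRS is probabilistically complete. I expect the crux to be the middle step, i.e., decoupling the strategy-subtree selection performed by \texttt{UCB-ST} from the node expansion performed by \texttt{Explore}: one must verify that the subtrees leading toward $\pi^*$ are still selected infinitely often despite the set of subtrees changing over time, and that the shrinking per-iteration success probability still meets the Borel--Cantelli divergence condition.
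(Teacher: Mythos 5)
Your proposal is correct and takes essentially the same route as the paper's proof: cover the winning strategy's trajectories with balls inside the clearance tube, use Lipschitz continuity so that sampling-based expansion almost surely produces tree paths following each reference path (the paper simply cites \cite{rrtpc} for this step rather than re-deriving the Gronwall/Borel--Cantelli details as you do), invoke Lemma~\ref{lemma:UCB-ST} so the relevant partial strategy subtrees keep being selected as the tree grows, and induct until the entire finite winning subtree is realized in $\tree$. The only real difference is bookkeeping: the paper inducts on the number of root-to-goal paths of $\pi^*$ that the selected subtree uniquely follows, whereas you induct stage-by-stage over the edges of $\pi^*$, and you make explicit (correctly) the subtle point about applying the selection lemma to extensions of subtrees as the tree evolves, which the paper glosses over.
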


\begin{proof}
    Consider a winning strategy subtree $\strategytree_{win}$ from $s_0$ to $S_{goal}$ with clearance $\delta_{clear} > 0$. Let $P$ be the set of all paths from $s_0$ to a leaf in $S_{goal}$. Now, a path $p_i \in P$ can be described by the sequence of nodes with states $p_i = s_0^{g_0} \ldots s^{g_1}_{k} \ldots s^{g_0}_{l} \ldots s^{g_1}_{m} \ldots s_{goal, p_i}$ ending in a node $s_{goal} \in S_{goal}$, where the superscript $g_1$ denotes that a guard is triggered, and $g_0$ denotes the guard is not triggered. Cover $p_i$ with a set of balls of radius $\delta$ centered at $s_0^{g_0}, s_1, \cdots, s_{goal, p_i}$. We say that a path $p_j$ \emph{follows} another path $p_i$ if each vertex of $p_j$ is within the $\delta$ radius ball of $p_i$.

    Since the flow functions $F$ and jump functions $J$ are Lipschitz continuous (Assumption~\ref{assumption: Lipschitz}), from \cite[Theorem 2]{rrtpc}, we are guaranteed that RRT asymptotically almost surely finds a control trajectory from $s_0$ to $S_{goal}$ that follows $p_i$ when starting from a tree which contains $s_0$. From Lemma~\ref{lemma:UCB-ST}, \texttt{UCB-ST} will always eventually select any strategy subtree $\strategytree_i$ of our search tree $\tree$. Let $t$ be the number of paths in $\strategytree$ that uniquely follows a path $p_i \in P$. Assume that at step $j$, the selected subtree $\strategytree$ contains $0 < t <|P|$ paths. As expansion iterations increase, $\strategytree$ asymptotically almost surely finds a path from $s_0$ to $S_{goal}$ that follows a new path $p_l \in P$ which it did not uniquely follow before. Hence, the new $\strategytree^+_{i}$ expanded from $\strategytree_{i}$ now contains $t + 1$ paths that unique follows paths in $P$. From Lemma~\ref{lemma:UCB-ST}, $\strategytree^+_i$ will eventually be selected again. By induction, $t \rightarrow |P|$ and a winning strategy is found.
\end{proof}
\section{Extensions to Improve Base Algorithm}
\label{sec:extensions}
We present three extensions that can improve efficiency of SaBRS without affecting its probabilistic completeness.

 \setlength{\textfloatsep}{12pt}
 \setlength{\intextsep}{12pt}
\begin{table*}[!ht]
    \centering
    \caption{ Benchmark planner performance results. We report the mean time with standard error, and success rate over 100 simulation trials, with best scores in bold. RRT in case study 2 and MCTS for both cases are excluded from the table since they had $0$ success rate.}
    \label{tab:benchmarks}
    \scalebox{0.94}{
    \begin{tabular}{c c c c c c c c c c}
    \toprule
    & \multirow{2}{*}{Algorithm}&  \multicolumn{2}{c}{\underline{\hspace{7mm} Environment 1 \hspace{7mm}}} & \multicolumn{2}{c}{\underline{\hspace{7mm} Environment 2 \hspace{7mm}}} & \multicolumn{2}{c}{\underline{\hspace{7mm} Environment 3 \hspace{7mm}}} & \multicolumn{2}{c}{\underline{\hspace{7mm} Environment 4 \hspace{7mm}}}\\
    &  & Time (s)& Success  (\%) & 
    Time (s)& Success (\%) & Time (s)& Success  (\%) & Time  & Success  (\%)\\
    \midrule
     \multirow{3}{*}{Case 1} & RRT   & $299.0 \pm 0.0$ & $3$  & $299.5 \pm 0.0$ &  $2$  & $300 \pm 0.0$  &  $1$ & --  & $0$ \\
     & Planner in \cite{Lahijanian2014} & $78.1 \pm 4.6$ & $91$ & $113.1 \pm 9.1$ & 82 & $153.68 \pm 15.5$ & $44$ & $222.5 \pm 26.3$ & $10$ \\
     & SaBRS  (Ours) & $\mathbf{4.2 \pm 0.5}$ & $\mathbf{100}$ & $\mathbf{8.3 \pm 1.1}$ & $\mathbf{100}$ & $\mathbf{23.04 \pm 3.6}$ & $\mathbf{99}$ & $\mathbf{57.9 \pm 6.5}$ & $\mathbf{93}$ \\
     \midrule
     \multirow{2}{*}{Case 2}  & Planner in \cite{Lahijanian2014} & $98.1 \pm 6.5$ & 96 & $157.1 \pm 12.0$ & $68$ & $208.9 \pm 18.3$ & $32$ & $251.1 \pm 49.0$ & $4$ \\
    & SaBRS (Ours) & $\mathbf{4.6 \pm 0.5}$ & $\mathbf{100}$ & $\mathbf{11.4 \pm 2.8}$ & $\mathbf{99}$ & $\mathbf{24.4 \pm 4.0}$ & $\mathbf{97}$ & $\mathbf{48.99 \pm 6.8}$ & $\mathbf{88}$\\
    \bottomrule
    \end{tabular}
    }
    \vspace{-4.2mm}
\end{table*}

\paragraph*{Warm Starting}
The effectiveness of SaBRS relies on selecting promising strategies based on \eqref{eq:costfunction}. However, when no branch of the search tree is in $S_{goal}$ yet, the costs of strategies remain the same, namely $1$. 
To improve the effectiveness of strategy selection, we first perform a \textit{warm start} of SaBRS by performing sampling-based exploration on the full \texttt{AND/OR} tree for some fixed time, or until at least one leaf node is in $S_{goal}$. In our experiments, we observed that warm starting is especially useful in problems with longer horizons, where reaching a goal by a leaf node is difficult.

\paragraph*{Strategy Tree Expansion Heuristics}
During strategy expansion, SaBRS uses sampling-based exploration. This exploration has been shown to be greatly improved by heuristic guidance mechanisms, such as goal bias and trajectory bias \cite{heuristicbias}. The exploration step of our algorithm is general and amenable to such heuristics. An example of such a heuristic
is \emph{Guided Path-Generation}, introduced in \cite{Lahijanian2014} for Problem~\ref{prob:winning strategy}. It uses the search tree branches that end in a goal state to guide the expansion of nodes (see \cite{Lahijanian2014} for details). 
To maintain probabilistic completeness of our algorithm, we use Guided Path-Generation with low probability $p$ and the random exploration of Alg.~\ref{alg:motionplanner} with probability $1-p$. In our evaluations, we find that guided path generation greatly improves efficiency if the hybrid goal set has the same continuous component in the goal modes, i.e., $S_{goal} = Q_{goal} \times X_{goal}$, where $Q_{goal} \subseteq Q$ and $X_{goal} \subseteq \bigcap_{q \in Q_{goal}} X_q$.

\paragraph*{Sub-Strategy Tree Selection}
At each strategy selection iteration, Alg.~\ref{alg:UCB-ST} chooses a subtree $\strategytree$ of the \texttt{AND/OR} tree $\tree$. This allows us to expand promising strategies currently available in $\tree$. $\tree$ has a discrete set of available inputs, a subset of the continuous input space. During exploration, $\strategytree$ is extended, increasing the discrete set of available inputs. When the search tree is very large and deep, it may become difficult to improve \textit{smaller} strategies within a subtree. To ameliorate this issue, we probabilistically prune parts of a subtree to obtain a smaller strategy, with probability $\rho$. This effectively chooses `no action' at an \textit{OR} node, leading to a smaller strategy that may be improved to become a winning strategy more easily. In our experiments, we observe that this improves search for longer horizon problems where there are many leaf nodes that lead to ``dead ends".
\section{Experiments}
We evaluate the performance of \SaBRS against $3$ state-of-the-art algorithms, namely kinodynamic RRT, continuous space Monte Carlo Tree Search (MCTS-PW), and the motion planner in \cite{Lahijanian2014}, in a series of benchmarks. We also provide several illustrative examples to show the generality of \SaBRS. We implemented all algorithms in \texttt{C++} using OMPL \cite{sucan2012the-open-motion-planning-library}. All computations were performed single-threaded on a nominally 3.60 GHz CPU with 32 GB RAM. For \SaBRS, we used $k = 5000$ and $e = 0.0005$ for all planning instances.

\subsection{Benchmarking Results}
We first evaluate the algorithms on the benchmark problems proposed in \cite{Lahijanian2014}.  
The problems consist of two NHS models of a three-gear second-order car system that is subject to nondeterminism when shifting gears in four environments \cite{Lahijanian2014}. In Case Study $1$, the system, when shifting from gear two to three, may mistakenly shift to gear one. In Case Study $2$, there is an additional nondeterminism when the car shifts from gear three to one. In this case, the system may mistakenly change to gear two instead of one. A solution is a winning strategy that reaches the goal state over all possible gear transitions from the initial state. We refer the reader to \cite{Lahijanian2014} for details on the dynamics of each gear. 

We conducted $100$ trials for each of the four environments and two cases, and set a time limit of $300$ seconds to find a solution per trial. The results are summarized in Table~\ref{tab:benchmarks}. In Case Study $1$, MCTS did not find any solutions. In Case Study $2$, both RRT and MCTS did not find any solutions. It is evident from the poor performance of both RRT and MCTS that neither a pure sampling-based exploration method nor a pure heuristic search method is effective for finding winning strategies for NHS planning problems. Further, we see that \SaBRS significantly outperforms the compared methods both in computation time (up to an order of magnitude) and success rate (up to 3$\times$) for finding winning strategies. This suggests that the combination of the bandit-based game theoretic approach for strategy selection and sampling-based exploration is important for reactive synthesis under nondeterminism.

\subsection{Robotic Charging System}

Next, we showcase \SaBRS's versatility to handle NHS with time constraints. Consider a robot with second-order car dynamics with a bounded motion disturbance at every time step, which is equipped with a closed-loop controller that is able to maintain the robot in a ball of radius $r$ around its nominal plan. 
The robot is tasked with navigating to the charger within $2$ minutes. If it goes over a water puddle, the robot needs to dry off on the carpet for $10$ seconds before going to the charger. During online execution, the robot has perfect observation of its state. However, during offline planning, due to the motion disturbance, if the radius $r$ ball around a nominal plan intersects with a water puddle, the robot may traverse the puddle during execution. Hence, the robot motion can be modelled as a NHS.

Figs.~\ref{fig:waterpuddle1} and \ref{fig:waterpuddle2} show two examples with different obstacle configurations. In both cases, \SaBRS finds a solution within $30$ seconds. In Fig.~\ref{fig:waterpuddle1}, when there is more space for the robot to traverse in the dry regions (in white), \SaBRS synthesizes a strategy such that the entire ball stays within the white region without touching the water (in blue) and navigates to the charger (green). When the environment is more cluttered (Fig.~\ref{fig:waterpuddle2}), such that the robot cannot reach the charger without guaranteeing that it does not go over the water puddle due to the uncertain ball, \SaBRS plans a reactive strategy with two possible trajectories. If it does not get wet, it navigates directly to the charger. If it gets wet, it first goes to and stays on the carpet for $10$ seconds before navigating to the charger.
\begin{figure}[t]
    \centering
    \begin{subfigure}{0.22\textwidth}
    \includegraphics[width=\textwidth, trim={1.75cm 2.7cm 1.4cm 2.5cm},clip]{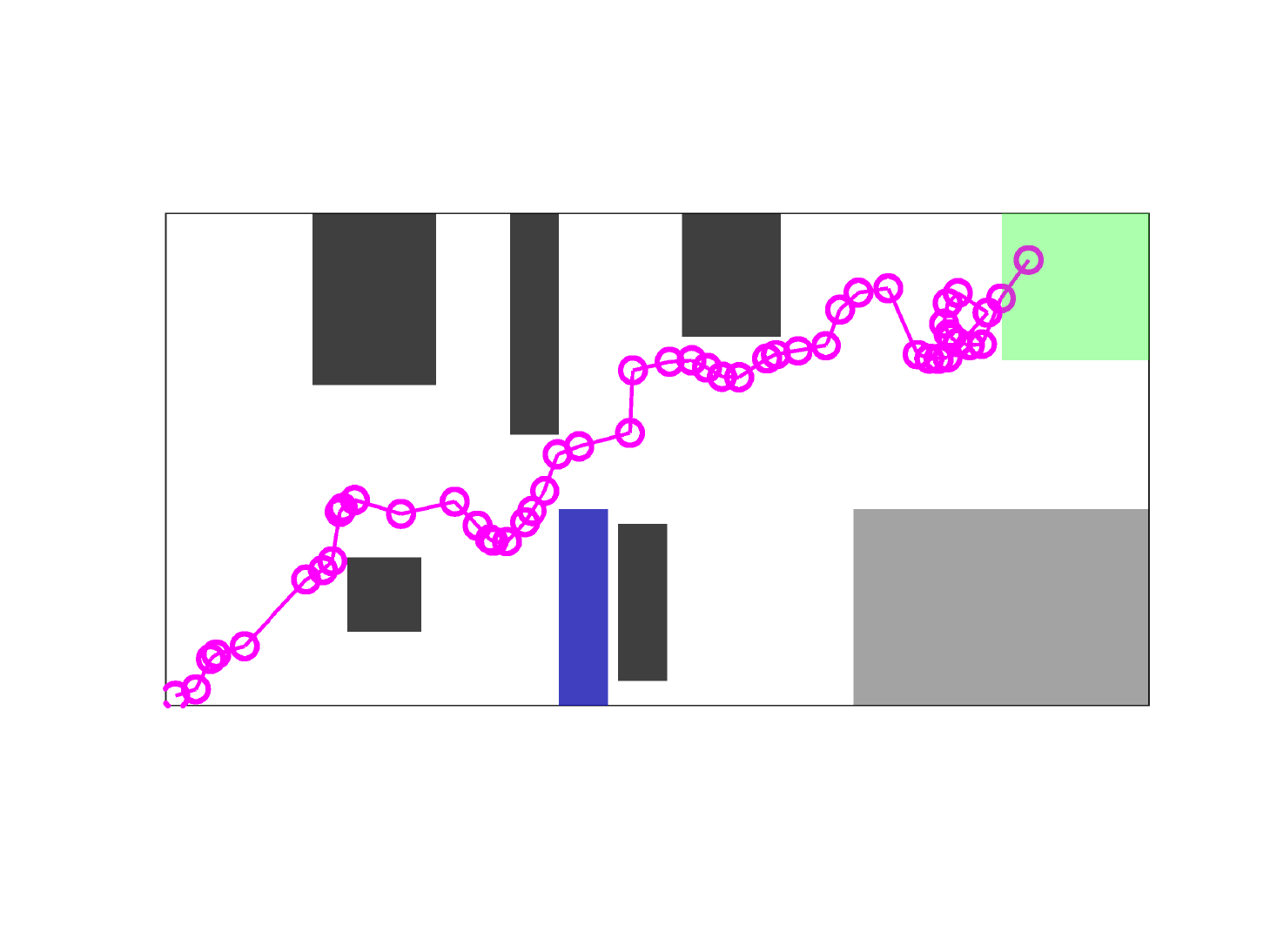}
        \vspace{-6mm}
    \caption{\footnotesize Spaced Obstacles}
    \label{fig:waterpuddle1}
    \end{subfigure}
    \begin{subfigure}{0.22\textwidth}
    \includegraphics[width=\textwidth, trim={1.75cm 2.7cm 1.4cm 2.5cm},clip]{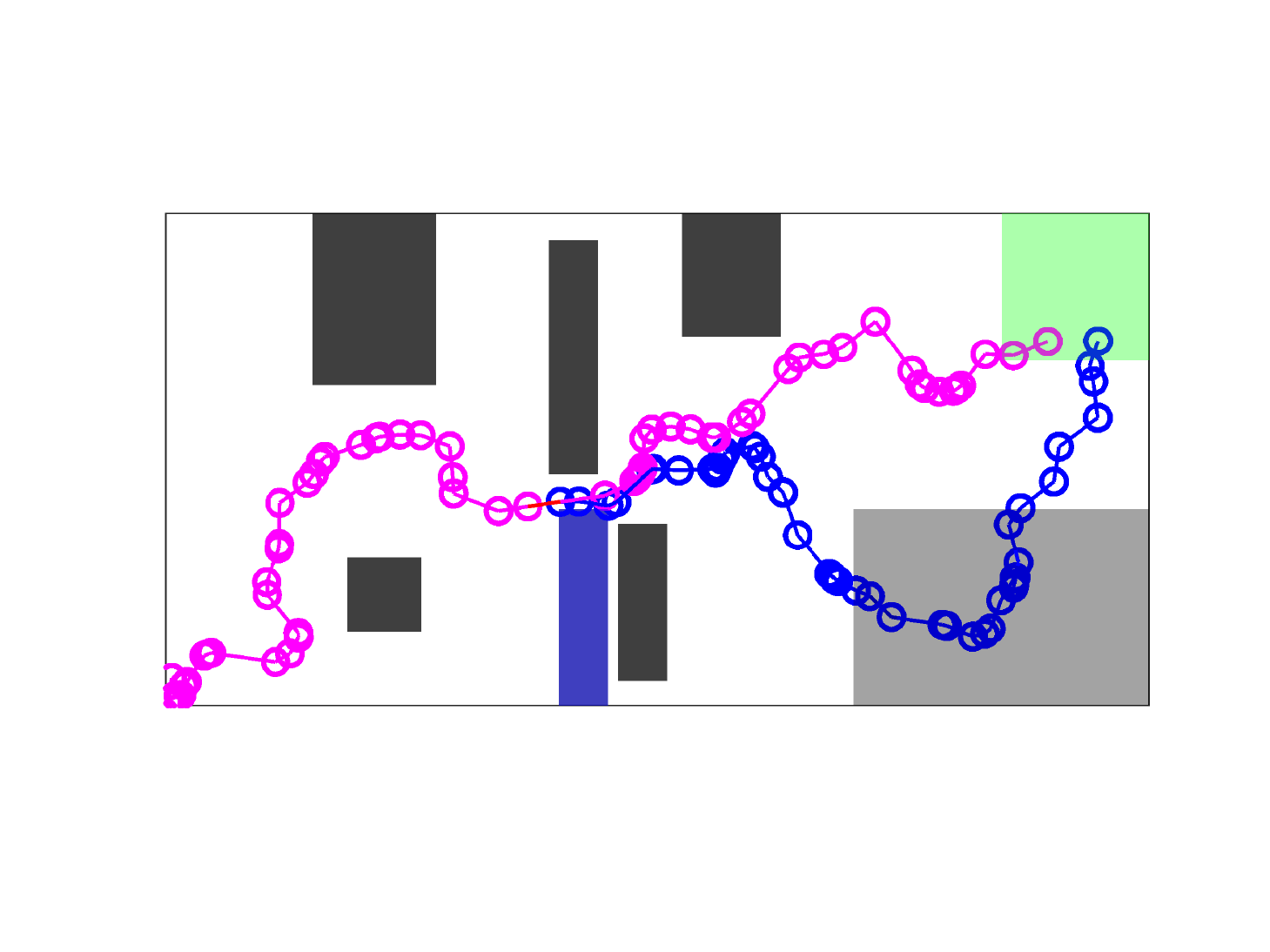}
    \vspace{-6mm}
    \caption{\footnotesize Cluttered Obstacles}
    \label{fig:waterpuddle2}
    \end{subfigure}
\caption{\small \SaBRS synthesizes strategies that account for nondeterminism in traversing the puddle due to motion uncertainty in the robotic charging example. The black, blue, grey and green regions represent obstacles, puddle, carpet, and charger, respectively.}
\label{fig:puddle}
\vspace{-3mm}
\end{figure}

\subsection{Search-and-rescue Scenario}

Next, we show that \SaBRS is effective for problems with complex temporal specifications. We consider variants of the search-and-rescue scenario of Example~\ref{ex:searchandrescue}, in which each room's door state is fixed but unknown at planning time. Fig.~\ref{fig:searchandrescue_trajectories} show example synthesized strategies (computed within $20s$) for a 3-room variant. Each possible trajectory in the reactive strategies is plotted with a different color, and a red dotted path segment indicates a nondeterministic mode transition. \SaBRS successfully finds winning strategies that reacts based on if a door is blocked, and if the human is in each room.

Finally, we compute and demonstrate strategies in real world experiments for a 2-room variant of the search-and-rescue scenario on a ClearPath Jackal robot. A demonstration video can be found at {\small\url{https://youtu.be/Hy1rNGfX6Zg}}.

\begin{figure}[t]
    \centering
    \begin{subfigure}{0.22\textwidth}
    \includegraphics[width=\linewidth, trim={14.75cm 2.75cm 12.5cm 2.25cm},clip]{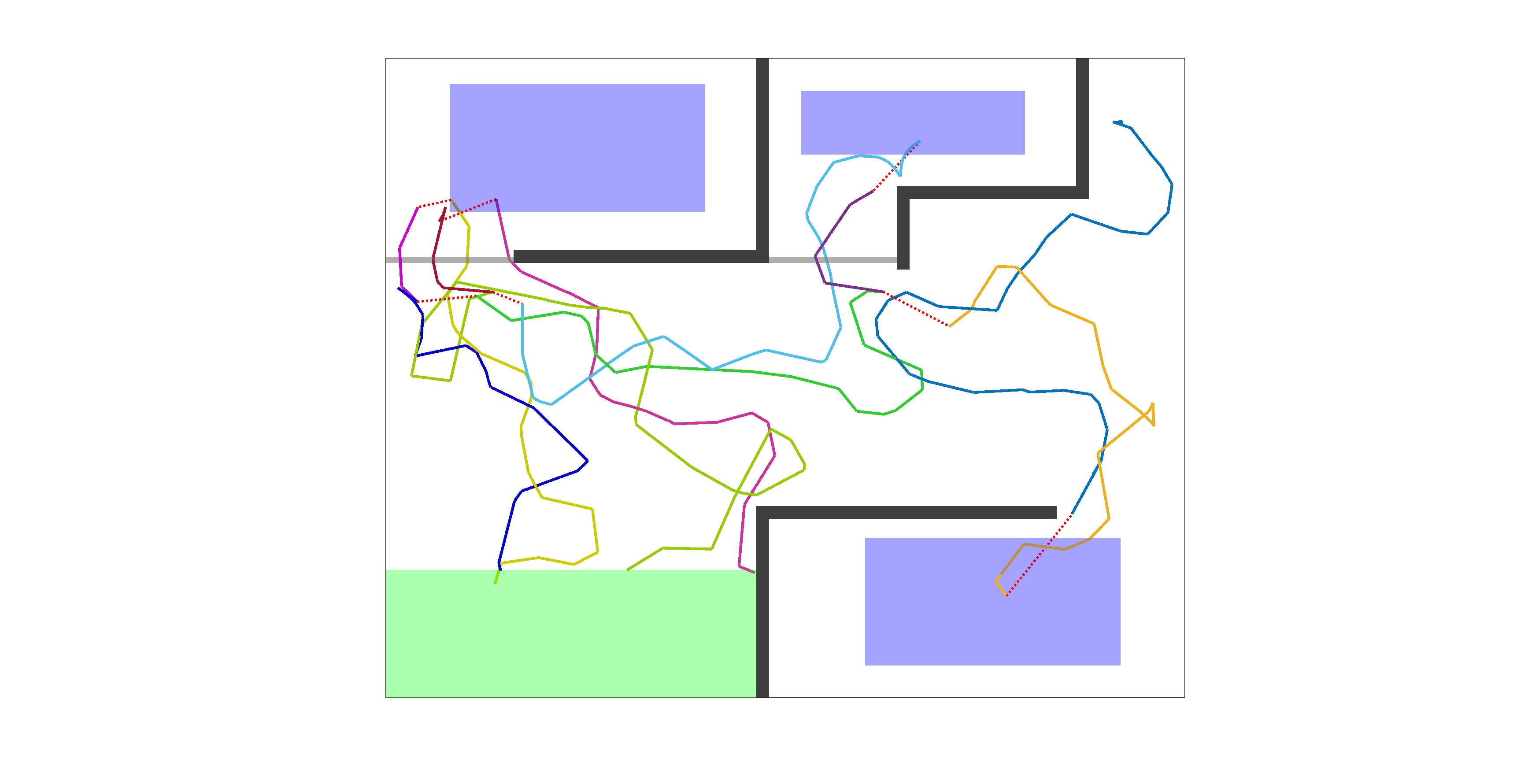}
        \vspace{-6mm}
    \caption{\footnotesize Human may be in all rooms.}
    \label{fig:rescuetrajectory1}
    \end{subfigure}
    ~~~
    \begin{subfigure}{0.22\textwidth}
    \includegraphics[width=\linewidth, trim={14.75cm 2.75cm 12.5cm 2.25cm},clip]{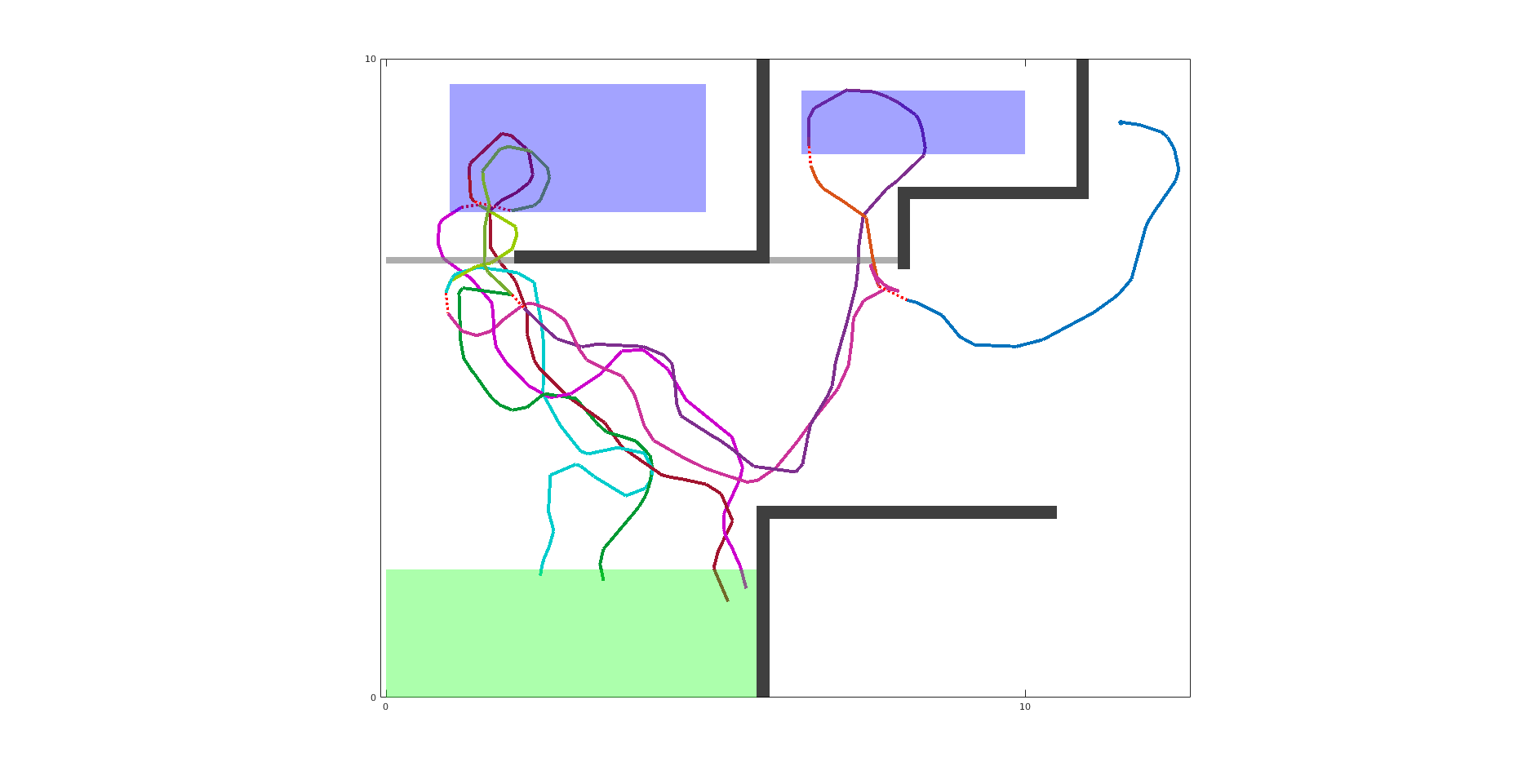}
    \vspace{-6mm}
    \caption{\footnotesize Human in either of 2 rooms.}
    \label{fig:rescuetrajectory2}
    \end{subfigure}
\caption{ \SaBRS synthesizes strategies in the search-and-rescue example. Purple regions represent possible human locations.}
\label{fig:searchandrescue_trajectories}
\vspace{-3mm}
\end{figure}
\section{Conclusion and Future Work}

This paper considers the problem of computing reactive strategies for NHS under temporal and reachability constraints. We propose \SaBRS, an algorithm that guarantees reachability under all possible NHS evolutions. \SaBRS combines the effectiveness of sampling-based motion planning with bandit-based techniques. We show that \SaBRS is probabilistically complete, and benchmarks and case studies demonstrate its effectiveness. For future work, we plan to extend \SaBRS to explicitly include uncertainty in continuous dynamic, and extend \SaBRS to also optimize an objective function.
\bibliographystyle{IEEEtran}
\bibliography{bib}

\end{document}